\documentclass[11pt,a4paper]{article}
\usepackage{mathrsfs}
\usepackage{amsmath}
\usepackage{bbm}
\usepackage{amsmath,amsthm,amssymb,amscd}
\usepackage{latexsym}
\usepackage{hyperref}
\usepackage[numbers,sort&compress]{natbib}
\usepackage{hypernat}
\usepackage{geometry}
\geometry{left=3.0cm,right=3.0cm,top=4.0cm,bottom=4.0cm}
%------------------------------------------------------------------------

\usepackage{enumerate}
\usepackage[utf8]{inputenc}
\usepackage[english]{babel}
\usepackage{authblk}

\newcommand{\tr}{{\rm Tr\hskip -0.3em}~}

\DeclareMathOperator{\frechetdiff}{\mathit d}
\newcommand{\fd}[1]{\frechetdiff\hskip -0.3em{#1}}

\newtheorem{theorem}{Theorem}[section]

\newtheorem{corollary}[theorem]{Corollary}
\newtheorem{lemma}[theorem]{Lemma}
\newtheorem{proposition}[theorem]{Proposition}

\newtheorem{remark}[theorem]{Remark}

\theoremstyle{definition} \theoremstyle{remark}
\numberwithin{equation}{section}

\begin{document}

\title{Variational representations related to Tsallis relative entropy}
 \author[1]{Guanghua Shi}
 \author[2]{Frank Hansen}

  \affil[1]{{\footnotesize School of Mathematical Sciences, Yangzhou University, Yangzhou, Jiangsu, China. \qquad sghkanting@163.com}}

  \affil[2]{\footnotesize Institute for Excellence in Higher Education, Tohoku University, Sendai, Japan. frank.hansen@m.tohoku.ac.jp}

\maketitle

\begin{abstract} We develop variational representations for the deformed logarithmic and exponential functions and use them to obtain variational representations related to the quantum Tsallis relative entropy. We extend Golden-Thompson's trace inequality to deformed exponentials with deformation parameter $ q\in[0,1], $ thus complementing the second author's previous study of the cases with deformation parameter $ q\in[1,2]$ or $ q\in[2,3]. $
\end{abstract}
MSC2010: 94A17; 81P45; 47A63; 52A41.\\[1ex]
{\bf Keywords:} Gibbs variational priciple; Golden-Thompson inequality; Lieb's concavity theorem; Tsallis relative entropy; variational representations.

\section{Introduction}
The well-known concavity theorem by Lieb \cite[Theorem 6]{Lie73} states that the map
\begin{equation}\label{Lieb's concavity theorem}
A\rightarrow \tr \exp(L+\log A),
\end{equation}
for a fixed self-adjoint matrix $L,$
is concave in positive definite matrices. This theorem is the basis for the proof of strong subadditivity of the quantum mechanical entropy \cite{LieR73}, and it is also very important in random matrix theory \cite{Tro122}.

Lieb's concavity theorem is also closely related to the Golden-Thompson trace inequality. Recently, the second author \cite[Theorem 3.1]{Han14} proved concavity of the trace function
\[
A \rightarrow \tr \exp\bigl(H^*\log(A)H\bigr)
\]
in positive definite matrices, if
$ H $ is a contraction. This result led to multivariate generalisations of the Golden-Thompson trace inequality. Likewise, the second author \cite[Theorem 3.1]{Han15} studied convexity/concavity properties of the trace function
\[
(A_1,\ldots,A_k)\rightarrow \tr \exp_q\left(\sum_{i=1}^k H^*_i\log_q(A_i)H_i\right)
\]
for $ H_1^*H_1+\cdots+H_kH_k^*=1, $
where $ \exp_q $ denotes the deformed exponential function, respectively $ \log_q $ denotes the deformed logarithmic function, for the deformation parameter $ q\in [1,3]. $ This analysis led to a generalization of Golden-Thompson trace inequality for $q$-exponentials with $ q\in[1,3]. $

There is furthermore a close relationship between Lieb's concavity theorem (\ref{Lieb's concavity theorem}) and entropies.
In \cite{Tro121}, Tropp formulated a variational representation
\begin{eqnarray}
\tr \exp(L+\log A)= \max_{X> 0} \left\{\tr (L+I)X -D(X|A)\right\},
\end{eqnarray}
where $D(X|A)=\tr (X\log X-X\log A)$ denotes the quantum relative entropy.
This variational representation, together with convexity of the quantum relative entropy, enabled Tropp to give an elementary proof of Lieb's concavity theorem \cite[Theorem 6]{Lie73}. Tropp's variational representation can easily be inverted to obtain a variational representation
\begin{eqnarray}
D(X|A)= \max_{L}\left\{ \tr(L+I)X-\tr \exp(L+\log A)\right\}
\end{eqnarray}
of the quantum relative entropy.
The well-known Gibbs variational principle for the quantum entropy
states that
\begin{eqnarray} \log\tr\exp L=\max_{X>0, \tr X=1} \left\{\tr X L - \tr X \log X\right\},
\end{eqnarray}
and for $X>0$ and $\tr X=1,$
\begin{eqnarray} -S(X)=\max_{L} \left\{\tr X L - \log\tr \exp L\right\},
\end{eqnarray}
where $L$ is self-adjoint.
Other variational representations in terms of the quantum relative entropy were given by Hiai and Petz \cite[Lemma 1.2]{HP93}:
\begin{eqnarray}
\log \tr \exp(L+\log A)= \max_{X> 0, \tr X=1} \left\{ \tr LX-D(X|A)\right\},
\end{eqnarray}
and for $X>0$ and $ \tr X=1,$
\begin{eqnarray}
D(X|A)= \max_{L}\left\{ \tr LX-\log \tr \exp(L+\log A)\right\}.
\end{eqnarray}
See \cite{Yu} for the various relations between the above variational representations. Furuichi \cite{Fur06} extended the two representations above to the deformed logarithmic and exponential functions with parameter $q\in [1,2]$:
\begin{eqnarray}
\log_q \tr \exp_q(L+\log_q A)= \max_{X> 0,\, \tr X=1} \left\{ \tr LX^{2-q}-D_{2-q}(X|A)\right\},
\end{eqnarray}
and if $X>0 $ and $ \tr X=1,$
\begin{eqnarray}
D_{2-q}(X|A)= \max_{L}\left\{ \tr LX^{2-q}-\log_q \tr \exp_q(L+\log_q A)\right\},
\end{eqnarray}
where $L$ is self-adjoint and $D_{2-q}(X|A)$ denotes the Tsallis relative quantum entropy with parameter $q\in[1,2].$

In section 2, we consider variational representations related to the deformed exponential and logarithmic functions by making use of the tracial Young's inequalities. In section 3 we then derive variational representations related to the Tsallis relative entropies, which may be considered extensions of equation $(1.2).$ In section 4, we consider the generalization of the Gibbs variational representations and then tackle the variational representations related to the Tsallis relative entropy under the conditions  $X>0$ and $\tr X=1.$  Finally, in section 5, we extend Golden-Thompson's trace inequality to deformed exponentials with deformation parameter $ q\in[0,1].$

Throughout this paper, the deformed logarithm denoted $\log_q$ is defined by setting
\begin{eqnarray*}
\log_q x=\left\{
\begin{array}{ll}
\displaystyle\frac{x^{q-1}-1}{q-1} \quad &q\neq 1\\[2.5ex]
\log x  &q=1.
\end{array}\qquad\qquad x>0
\right.
\end{eqnarray*}
The deformed logarithm is also denoted the $q$-logarithm. The deformed exponential function or the $ q $-exponential is defined as the inverse function to the $ q$-logarithm. It is denoted by  $\exp_q$ and is given by the formula
\begin{eqnarray*}
\exp_q x=\left\{
\begin{array}{lll}
(x (q-1)+1)^{1/(q-1)},\qquad &x> -1/(q-1), \qquad &q>1\\[1.5ex]
(x (q-1)+1)^{1/(q-1)}, &x< -1/(q-1), \qquad &q<1\\[1.5ex]
\exp x, &x\in\mathbf R,\qquad &q=1.
\end{array}
\right.
\end{eqnarray*}

The Tsallis relative entropy $ D_{p}(X\mid Y) $ is for positive definite matrices $ X,Y  $ and $ p\in[0,1) $ defined, see  \cite{Tsa88}, by setting
\[
D_{p}(X\mid Y)= \frac{\tr (X-X^{p}Y^{1-p})}{1-p}=\tr X^{p}(\log_{2-p}X-\log_{2-p}Y).
\]
This expression converges for $ p\rightarrow 1 $ to the relative quantum entropy $D(X\mid Y)$
introduced by Umegaki \cite{Ume62}. It is known that the Tsallis relative entropy is non-negative for states \cite[Proposition 2.4]{FYK04}, see also \cite[Lemma 1]{HLS17} for a direct proof of the non-negativity.

\section{Variational representations for some trace functions}

We consider variational representations related to the deformed logarithm functions.

\begin{lemma}\label{2.1} For positive definite operators $X$ and $ Y$ we have
\[
\tr Y=\left\{
\begin{array}{ll}
\displaystyle\max_{X>0}\bigl\{\tr X-\tr X^{2-q}\left(\log_q X-\log_q Y\right)\bigr\}, \qquad &q\le 2\\[2.5ex]
\displaystyle\min_{X>0}\bigl\{\tr X-\tr X^{2-q}\left(\log_q X-\log_q Y\right)\bigr\},  &q>2.
\end{array}
\right.
\]
\end{lemma}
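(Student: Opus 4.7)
The plan is to simplify the expression
\begin{equation*}
g(X) := \tr X - \tr X^{2-q}\bigl(\log_q X - \log_q Y\bigr)
\end{equation*}
using the explicit form of the $q$-logarithm, and then reduce the variational claim to a (possibly reversed) tracial Young inequality. For $q\neq 1$, substituting $\log_q Z = (Z^{q-1}-I)/(q-1)$ telescopes the difference and gives
\begin{equation*}
X^{2-q}\bigl(\log_q X-\log_q Y\bigr) = \frac{X - X^{2-q}Y^{q-1}}{q-1},
\end{equation*}
so that
\begin{equation*}
g(X) = \frac{(q-2)\tr X + \tr X^{2-q}Y^{q-1}}{q-1}.
\end{equation*}
Substituting $X=Y$ yields $g(Y)=\tr Y$, confirming that the asserted extremum is attained, and reducing the lemma to a one-sided comparison between $g(X)$ and $\tr Y$.

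Setting $a=2-q$ and clearing the denominator, the inequality $g(X)\le \tr Y$ becomes the tracial Young comparison
\begin{equation*}
\tr X^a Y^{1-a} \le a\,\tr X + (1-a)\,\tr Y,
\end{equation*}
for $a\in(0,1)$, and its reverse for $a<0$ or $a>1$. The forward version (i.e.\ $q\in(1,2)$) is the standard tracial Young inequality, which I would derive from H\"older's inequality for Schatten norms combined with the scalar Young inequality; the reverse version (i.e.\ $q<1$ or $q>2$) comes from the same Schatten-norm argument applied with conjugate exponents of opposite signs. Carefully tracking the sign of the factor $q-1$ and the direction of the Young inequality, these combine to give $g(X)\le \tr Y$ for every $q<2$ (both subcases $q\in(0,1)$ and $q\in(1,2)$) and $g(X)\ge \tr Y$ for every $q>2$.

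Two endpoint cases are handled separately. For $q=1$ the expression becomes $g(X)=\tr X-\tr X(\log X-\log Y)$, and the bound $g(X)\le\tr Y$ is precisely Klein's inequality $\tr X(\log X-\log Y)\ge\tr(X-Y)$. For $q=2$ one has $X^{2-q}=I$ and $\log_2 Z=Z-I$, whence $g(X)=\tr X-\tr(X-Y)=\tr Y$ identically, and both max and min are trivial. The main technical obstacle I anticipate is the \emph{reverse} tracial Young inequality for $a\notin[0,1]$: in the scalar case it is elementary (it is merely the concavity/convexity of $t\mapsto t^a$ in disguise), but its matrix form is less standard than the forward version, so I would carefully justify it either via the Schatten--H\"older machinery adapted to negative exponents, or by appealing to the tracial Young inequalities developed at the start of this section.
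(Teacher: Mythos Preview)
Your proposal is correct and follows essentially the same route as the paper: both arguments reduce the claim, via the substitution $p=2-q$ (your $a$), to the tracial Young inequality $\tr X^{p}Y^{1-p}\le p\,\tr X+(1-p)\,\tr Y$ for $p\in[0,1]$ and its reverse for $p\notin[0,1]$, then observe equality at $X=Y$. The only noteworthy difference is in sourcing the reverse inequality: the paper extracts it from an inequality of Carlen--Frank--Lieb by a change of variables, whereas you propose a Schatten--H\"older argument with negative conjugate exponents; your explicit treatment of the endpoints $q=1$ (Klein) and $q=2$ (trivial) is a nice addition that the paper leaves implicit.
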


\begin{proof} For positive definite operators $X$ and $ Y,$  the tracial Young inequality states that
\begin{eqnarray*}
\tr X^{p}Y^{1-p} & \le & p \tr X+(1-p)\tr Y, \qquad p\in [0,1].
\end{eqnarray*}
As for the reverse tracial Young inequalities, we refer the readers to the proof of \cite[Lemma 12]{CFL18} from which we extracted the inequality 
\[
\tr X^s\le s \tr XY + (1-s)\tr Y^{-s/(1-s)}, \qquad 0<s<1.
\]
Replacing $s$ by $1/p$ and then replacing $X$ by $X^p$ and $Y$ by $Y^{1-p},$ it follows for $p>1$ that
\[
\tr X^pY^{1-p}\ge p \tr X + (1-p) \tr Y.
\]
It is also easy to see that the above inequality holds for $p<0.$ 
Thus it follows that
\begin{eqnarray*}
\tr Y\ge \tr X-\frac{\tr X-\tr X^pY^{1-p}}{1-p}\,, \qquad p\in[0,+\infty);
\end{eqnarray*}
and
\begin{eqnarray*}
\tr Y\le \tr X-\frac{\tr X-\tr X^pY^{1-p}}{1-p}\,, \qquad p\in(-\infty,0).
\end{eqnarray*}
For $X=Y$ the above inequalities become equalities, hence
\begin{eqnarray*}
\tr Y=\left\{
\begin{array}{ll}
\displaystyle\max_{X>0}\Bigl\{\tr X-\frac{\tr X-\tr X^pY^{1-p}}{1-p}\Bigr\}, \qquad &p\ge 0,\\[2ex]
\displaystyle\min_{X>0}\Bigl\{\tr X-\frac{\tr X-\tr X^pY^{1-p}}{1-p}\Bigr\}, \quad &p< 0.
\end{array}
\right.
\end{eqnarray*}
Setting $q=2-p,$ we obtain
\begin{eqnarray*}
\tr Y=\left\{
\begin{array}{ll}
\displaystyle\max_{X>0}\Bigl\{\tr X-\frac{\tr X^{2-q}\left(X^{q-1}-Y^{q-1}\right)}{q-1}\Bigr\}, \qquad &q\le 2,\\[2ex]
\displaystyle\min_{X>0}\Bigl\{\tr X-\frac{\tr X^{2-q}\left(X^{q-1}-Y^{q-1}\right)}{q-1}\Bigr\}, &q > 2.
\end{array}
\right.
\end{eqnarray*}
\end{proof}

\begin{theorem}\label{2.2} Let $H$ be a contraction.
For a positive definite operator $A$  we have the variational representations
\[
\begin{array}{l}
\tr \exp_q\left(H^*\log_q (A) H\right)\\[2.5ex]
=\left\{
\begin{array}{ll}
\displaystyle\max_{X>0}\bigl\{\tr X-\tr X^{2-q}\left(\log_q X-H^*\log_q (A) H\right)\bigr\}, \qquad &q\le 2,\\[2.5ex]
\displaystyle\min_{X>0}\bigl\{\tr X-\tr X^{2-q}\left(\log_q X-H^*\log_q (A) H\right)\bigr\}, &q>2.
\end{array}
\right.
\end{array}
\]
\end{theorem}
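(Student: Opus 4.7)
The plan is to derive Theorem \ref{2.2} as a direct application of Lemma \ref{2.1} with a carefully chosen $Y$. Namely, set
\[
Y := \exp_q\bigl(H^*\log_q(A) H\bigr),
\]
so that by construction $\log_q Y = H^*\log_q(A) H$. Substituting this identity into the right-hand side of Lemma \ref{2.1} replaces $\log_q Y$ by $H^*\log_q(A)H$ and $\tr Y$ by $\tr\exp_q\bigl(H^*\log_q(A) H\bigr)$, producing exactly the formula of the theorem, with the $\max$ and $\min$ cases according as $q \le 2$ or $q > 2$.

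The substantive content of the argument is therefore to verify that $Y$ really is a well-defined positive definite operator, i.e.\ that the spectrum of $B := H^*\log_q(A) H$ lies in the domain of the scalar function $\exp_q$. For $q>1$, the range of $\log_q$ is $\bigl(-(q-1)^{-1},\infty\bigr)$, so the operator $\log_q(A)+(q-1)^{-1} I$ is strictly positive. We decompose
\[
B + (q-1)^{-1} I \;=\; H^*\bigl(\log_q(A)+(q-1)^{-1} I\bigr) H \;+\; (q-1)^{-1}\bigl(I - H^*H\bigr),
\]
a sum of two non-negative operators since $H$ is a contraction. If $\langle(I-H^*H)x,x\rangle = 0$ for some unit vector $x$, then $H^*Hx = x$ and hence $\|Hx\|=1$, so the strict positivity of $\log_q(A)+(q-1)^{-1} I$ forces $\bigl\langle H^*(\log_q(A)+(q-1)^{-1} I)Hx,x\bigr\rangle > 0$. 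Consequently $B > -(q-1)^{-1} I$ strictly, and $Y = \exp_q(B)$ is well defined and positive definite. The case $q<1$ is handled symmetrically, using that $\log_q$ has range $\bigl(-\infty,(1-q)^{-1}\bigr)$ and yielding the strict bound $B < (1-q)^{-1} I$, while $q=1$ reduces to the ordinary exponential with no domain concerns.

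The main obstacle I anticipate is precisely this domain verification: the hypothesis that $H$ is only a contraction rather than an isometry allows $I-H^*H$ to have non-trivial kernel, and the decomposition above is designed to resolve this degenerate case by appealing to strict positivity of $\log_q(A)+(q-1)^{-1} I$ on the range of $H$. Once $Y$ is established as a positive definite operator, the identity $\log_q Y = H^*\log_q(A) H$ is automatic, and Lemma \ref{2.1} delivers the two variational representations of the theorem with no further work.
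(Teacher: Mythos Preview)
Your proposal is correct and follows exactly the paper's approach: set $Y=\exp_q\bigl(H^*\log_q(A)H\bigr)$ and apply Lemma~\ref{2.1}. The only difference is that you supply a careful justification of the strict spectral bound $H^*\log_q(A)H>-(q-1)^{-1}$ (respectively $<-(q-1)^{-1}$), which the paper simply asserts from the contraction hypothesis.
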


\begin{proof} Since $H$ is contraction, it follows for $ q>1 $ that
\[
H^*\log_q(A)H> \frac{-1}{q-1}\,.
\]
By setting
$Y= \exp_q\left(H^*\log_q (A) H\right)$ in Lemma 2.1, we obtain the conclusions in the case $ q\ge 1. $
For $ q<1 $
we have
\[
H^*\log_q(A)H< \frac{-1}{q-1}\,.
\]
Setting $Y= \exp_q\left(H^*\log_q (A) H\right)$ in Lemma 2.1, we obtain the conclusions for $ q< 1. $
\end{proof}

\begin{corollary}\label{2.3} Let $H$ be a contraction and consider the map
\begin{eqnarray*}
\varphi(A)= \tr \exp_q\left(H^*\log_q (A) H\right)
\end{eqnarray*}
defined in positive definite operators. The following assertions are valid:
\begin{enumerate}
\item[(i)]  $ \varphi(A) $ is concave for $ 0\le q<1, $    
\item[(ii)] $ \varphi(A) $ is concave for $ 1\le q\le 2, $ 
\item[(iii)] $ \varphi(A) $ is convex for  $ 2< q\le 3. $  
\end{enumerate}
\end{corollary}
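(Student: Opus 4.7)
The plan is to derive the three convexity assertions directly from the variational representation of Theorem~\ref{2.2}, combining it with the standard fact that the pointwise supremum (respectively infimum) of a family of functions that are \emph{jointly} concave (resp.\ convex) in $(X,A)$ is concave (resp.\ convex) in the marginal variable $A$. Thus for $q\le 2$ it suffices to verify joint concavity of the inner function
\[
G(X,A)=\tr X-\tr X^{2-q}\bigl(\log_q X-H^*\log_q(A)H\bigr),
\]
and for $q\in(2,3]$ to verify joint convexity of $G$, after which the conclusion is automatic.

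Using $\log_q t=(t^{q-1}-1)/(q-1)$, I would rewrite, for $q\neq 1$,
\[
G(X,A)=\frac{q-2}{q-1}\,\tr X+\frac{1}{q-1}\,\tr X^{2-q}(I-H^*H)+\frac{1}{q-1}\,\tr X^{2-q}H^*A^{q-1}H.
\]
The first two terms depend only on $X$. Because $H$ is a contraction, $I-H^*H\ge 0$, so the behaviour of the middle term in $X$ is governed by operator concavity/convexity of $t\mapsto t^{2-q}$; this function is operator concave for $2-q\in[0,1]$ and operator convex for $2-q\in[-1,0]\cup[1,2]$, and coupled with the sign of $q-1$ the middle term is concave in $X$ precisely for $q\in[0,1)\cup(1,2]$ and convex for $q\in(2,3]$, exactly what is needed.

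For the genuinely coupled last term I would invoke the Lieb--Ando theorem: for any matrix $K$, the functional $(X,A)\mapsto\tr K^*X^s K A^{1-s}$ is jointly concave on positive definite matrices when $s\in[0,1]$ (Lieb) and jointly convex when $s\in[-1,0]\cup[1,2]$ (Ando). Applying this with $K=H$ and $s=2-q$, the exponent constraint $(2-q)+(q-1)=1$ is automatic; dividing by $q-1$ and tracking its sign, the coupling term is jointly concave in $(X,A)$ for $q\in[0,1)\cup(1,2]$ and jointly convex for $q\in(2,3]$. Summing the three contributions gives the required joint convexity/concavity of $G$ on each $q$-range, and Theorem~\ref{2.2} then transfers the property to $\varphi$.

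The remaining case $q=1$ is a removable singularity of the above formula and can be handled either by continuity in $q$ or by specialising to the classical variational representation of Tropp and using Lindblad's joint convexity of the Umegaki relative entropy, recovering Hansen's concavity result in~\cite{Han14}. The main technical obstacle is the appeal to the convex half of the Lieb--Ando theorem, $s\in[-1,0]\cup[1,2]$, which is what drives cases (i) and (iii); the concave half is the familiar Lieb concavity theorem and is what handles case (ii). The sign bookkeeping between the factor $1/(q-1)$ and the Lieb/Ando regimes is also delicate and must be done carefully for each sub-interval of $[0,3]$.
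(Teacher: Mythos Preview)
Your proof is correct and follows essentially the same route as the paper: both rewrite $G(X,A)$ via $\log_q t=(t^{q-1}-1)/(q-1)$ as a linear term plus $\frac{1}{q-1}\bigl[\tr X^{2-q}(I-H^*H)+\tr X^{2-q}H^*A^{q-1}H\bigr]$, invoke Lieb's concavity for case~(ii) and Ando's convexity for cases~(i) and~(iii) on the coupled term, track the sign of $q-1$, and then pass to $\varphi$ via Theorem~\ref{2.2} together with the Carlen--Lieb partial-supremum lemma. The only slip is cosmetic: to match $\tr X^{2-q}H^*A^{q-1}H=\tr H X^{2-q}H^*A^{q-1}$ you should take $K=H^*$ rather than $K=H$ in the Lieb--Ando functional.
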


\begin{proof} By calculation we obtain
\begin{eqnarray*}
&&\tr X-\tr X^{2-q}\left(\log_q X-H^*\log_q (A) H\right)
\\&=&
\left(1-\frac{1}{q-1}\right)\tr X+ \frac{1}{q-1}\left[\tr X^{2-q}(1-H^*H)+\tr X^{2-q}H^*A^{q-1}H\right].
\end{eqnarray*}
Under the assumption in $ (i), $ we have
\[
1<2-q\le 2,\quad -1\le q-1<0,\qquad (2-q)+(q-1)=1.
\]
By Ando's convexity theorem, the trace function $\tr X^{2-q}H^*A^{q-1}H$ is thus jointly convex in $(X,A).$ We also realize that $\tr X^{2-q}(1-H^*H)$ is convex in $X.$ Therefore,
\[
\tr X-\tr X^{2-q}\left(\log_q X-H^*\log_q (A) H\right)
\]
is jointly concave in $(X,A).$ Hence, by Theorem 2.2 and \cite[Lemma 2.3]{CLie08} we obtain that $ \varphi(A) $ is concave for $ 0\le q<1. $ Under the assumption in $ (ii), $ we have
\[
0\le 2-q\le 1,\qquad 0\le q-1\le 1,\qquad (2-q)+(q-1)=1.
\]
By Lieb's concavity theorem, the trace function $\tr X^{2-q}H^*A^{q-1}H$ is jointly concave in $(X,A).$ The expression
\[
\tr X-\tr X^{2-q}\left(\log_q X-H^*\log_q (A) H\right)
\]
is therefore also jointly concave in $(X,A).$ By Theorem 2.2 and \cite[Lemma 2.3]{CLie08} we obtain that $ \varphi(A) $ is concave for $ 1\le q\le 2. $ Under the assumption in $ (iii), $  we have
\[
-1\le 2-q< 0,\qquad 1< q-1\le 2,\qquad (q-1)+(2-q)=1.
\]
By Ando's convexity theorem, the trace function $\tr X^{2-q}H^*A^{q-1}H$ is jointly convex in $(X,A).$ Since obviously $\tr X^{2-q}(1-H^*H)$ is convex in $X,$ we obtain that
\[
\tr X-\tr X^{2-q}\left(\log_q X-H^*\log_q (A) H\right)
\]
is jointly convex in $(X,A).$ Hence  $ \varphi(A) $ is convex for $ 2< q\le 3, $
by Theorem 2.2 and \cite[Lemma 2.3]{CLie08}.
\end{proof}

\begin{remark}\label{2.4}
The second author \cite{Han15} proved
the cases $1\le q\le 2$ and $2\le q\le 3$ in the above corollary by another method.  The case $0\le q\le 1$ may be similarly proved by using that the trace function
\[
A\rightarrow \tr (H^* A^{q-1} H)^{1/(q-1)}
\]
is concave for $0\le q\le 1.$
\end{remark}

\begin{proposition}\label{2.5}
Let $ H $ be a contraction.
\begin{enumerate}
\item[(i)] If $1\le q\le 2,$ then for positive definite $A$ and self-adjoint $L$ such that
\begin{eqnarray*}
L+H^*\log_q (A) H>-\frac{1}{q-1}\,,
\end{eqnarray*}
we have the equality
\begin{eqnarray*}
&&\tr \exp_q (L+H^*\log_q (A) H)\\[1ex]&=&
\max_{X>0}\bigl\{\tr X+\tr X^{2-q}L - \tr X^{2-q} \left(\log_q X-H^*\log_q (A) H\right)\bigr\}.
\end{eqnarray*}

\item[(ii)] If $ q>2, $ then for positive definite $A$ and self-adjoint $L$ such that
\begin{eqnarray*}
L+H^*\log_q (A) H>-\frac{1}{q-1}\,,
\end{eqnarray*}
we have the equality
\begin{eqnarray*}
&&\tr \exp_q (L+H^*\log_q (A) H)\\[1ex]&=&
\min_{X>0}\bigl\{\tr X+\tr X^{2-q}L -\tr X^{2-q}\left(\log_q X-H^*\log_q (A) H\right)\bigr\}.
\end{eqnarray*}

\item[(iii)] If $ q< 1, $ then for positive definite $A$ and self-adjoint $L$ such that
\begin{eqnarray*}
L+H^*\log_q (A) H<-\frac{1}{q-1}\,,
\end{eqnarray*}
we have the equality
\begin{eqnarray*}
&&\tr \exp_q (L+H^*\log_q (A) H)\\[1ex]&=&\max_{X>0}\bigl\{\tr X+\tr X^{2-q}L-\tr X^{2-q}\left(\log_q X-H^*\log_q (A) H\right)\bigr\}.
\end{eqnarray*}
\end{enumerate}
\end{proposition}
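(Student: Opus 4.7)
The plan is to reduce the proposition directly to Lemma~\ref{2.1} via a single substitution. Given the hypotheses, the $q$-exponential is applicable to $L+H^*\log_q(A)H$: for $q>1$ (parts (i) and (ii)) the domain of $\exp_q$ is $(-1/(q-1),+\infty)$, while for $q<1$ (part (iii)) it is $(-\infty,-1/(q-1))$, matching the imposed operator inequalities exactly. Setting
\[
Y=\exp_q\bigl(L+H^*\log_q(A)H\bigr),
\]
the functional calculus shows that $Y$ is positive definite in each case (in particular for $q<1$, $x(q-1)+1>0$ whenever $x<-1/(q-1)$, and the resulting negative power of a positive quantity is positive), and by the inverse relationship one has $\log_q Y = L+H^*\log_q(A)H$.

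Next I would feed this $Y$ into Lemma~\ref{2.1}. The right-hand side of that lemma becomes
\[
\tr X - \tr X^{2-q}\bigl(\log_q X - L - H^*\log_q(A)H\bigr),
\]
and redistributing the $L$-term gives precisely the expression appearing in the proposition, namely $\tr X + \tr X^{2-q}L - \tr X^{2-q}(\log_q X - H^*\log_q(A)H)$. Since Lemma~\ref{2.1} yields a maximum for $q\le 2$ and a minimum for $q>2$, the three cases of the proposition slot in automatically: parts (i) ($1\le q\le 2$) and (iii) ($q<1$) land in the maximum regime, while part (ii) ($q>2$) lands in the minimum regime.

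This is essentially bookkeeping on top of Lemma~\ref{2.1}. The only substantive point is to verify that the domain condition for $\exp_q$ is respected, and the hypotheses on $L+H^*\log_q(A)H$ have been posited exactly to guarantee this; so I do not expect any real obstacle in the argument.
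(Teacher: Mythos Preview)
Your proposal is correct and follows essentially the same approach as the paper: set $Y=\exp_q\bigl(L+H^*\log_q(A)H\bigr)$ and invoke Lemma~\ref{2.1}. If anything, you supply slightly more detail on why $Y$ is positive definite and on the term redistribution than the paper's own proof does.
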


\begin{proof} Under the assumptions of $(i),(ii)$ and $(iii),$
the expression
$
\exp_q(L+H^*\log_q(A)H)
$
is well-defined and positive definite.
By setting $ Y=\exp_q(L+H^*\log_q(A)H) $ in Lemma 2.1, we obtain $(i),(ii)$ and $(iii).$
\end{proof}

\begin{corollary}\label{2.6}
Let $H$ be a contraction, and let $ L $ be positive definite. The map
\[
A\rightarrow \tr \exp_q (L+H^*\log_q (A) H),
\]
defined in positive definite operators, is concave for $1\le q\le 2$ and convex for $ 2<q\le 3. $
The map
\[
A\rightarrow \tr \exp_q (-L+H^*\log_q (A) H),
\]
defined in positive definite operators, is concave for $ 0\le q< 1. $
\end{corollary}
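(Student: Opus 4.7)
The plan is to apply Proposition 2.5 to represent each trace function as an extremum over $X > 0$ of an auxiliary functional, and then deduce the desired concavity or convexity from joint concavity or convexity of that functional in $(X,A)$ via \cite[Lemma 2.3]{CLie08}. This mirrors the strategy of Corollary 2.3, with Proposition 2.5 in the role of Theorem 2.2.

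The first step is to verify that the hypotheses of Proposition 2.5 are automatic once $L$ is positive definite. For $q \ge 1$, the inequality $\log_q(A) > -(q-1)^{-1}I$ together with $H^*H \le I$ gives $H^*\log_q(A)H > -(q-1)^{-1}I$, and adding the positive operator $L$ preserves the strict inequality, so cases (i) and (ii) of Proposition 2.5 apply to $L + H^*\log_q(A)H$. For $q < 1$, one has $\log_q(A) < -(q-1)^{-1}I$ (note that $-(q-1)^{-1} > 0$), hence $H^*\log_q(A)H < -(q-1)^{-1}I$ since $H$ is a contraction; subtracting $L > 0$ preserves this inequality, so case (iii) applies to $-L + H^*\log_q(A)H$.

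By Proposition 2.5, the trace function in question equals the maximum (in the two concave cases) or the minimum (in the convex case) over $X > 0$ of
\[
F(X,A) = \tr X \pm \tr X^{2-q} L - \tr X^{2-q}\bigl(\log_q X - H^*\log_q(A)H\bigr).
\]
The subexpression obtained by dropping the $\pm\tr X^{2-q}L$ term was already shown in the proof of Corollary 2.3 to be jointly concave in $(X,A)$ for $0 \le q \le 2$ and jointly convex for $2 < q \le 3$. The added term $\pm\tr X^{2-q}L$ depends only on $X$; since $L > 0$ and $x \mapsto x^p$ is operator concave for $p \in [0,1]$ and operator convex for $p \in [-1,0]\cup[1,2]$, the map $X \mapsto \tr X^{2-q}L$ is concave when $1 \le q \le 2$ and convex when $q \in [0,1]\cup[2,3]$. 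In each of the three cases the sign $\pm$ is precisely the one that makes $\pm\tr X^{2-q}L$ match the concavity/convexity of the remainder, so $F$ is jointly concave (resp.\ convex) on the appropriate range, and \cite[Lemma 2.3]{CLie08} delivers the conclusion.

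The argument is essentially a bookkeeping extension of Corollary 2.3; the only substantive new check is that the sign in front of $\tr X^{2-q}L$ is compatible with the operator-concavity/convexity regime of $x \mapsto x^{2-q}$ in each of the three $q$-ranges, which is why the statement for $0 \le q < 1$ requires $-L$ rather than $+L$.
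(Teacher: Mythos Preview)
Your proposal is correct and follows essentially the same route as the paper's proof: apply Proposition~2.5 to obtain the variational representation, invoke the joint concavity/convexity analysis already carried out in Corollary~2.3 for the term $\tr X-\tr X^{2-q}(\log_q X-H^*\log_q(A)H)$, handle the additional term $\pm\tr X^{2-q}L$ via operator concavity/convexity of $x\mapsto x^{2-q}$, and conclude with \cite[Lemma 2.3]{CLie08}. You even spell out the verification of the domain hypotheses for Proposition~2.5 more explicitly than the paper does.
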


\begin{proof}
If $1\le q\le 2,$ the map $ X\to \tr X^{2-q}L $ is concave.
By an argument similar to the proof of Corollary 2.3 (ii), we obtain that the expression
\[
\tr X-\tr X^{2-q}\left(\log_q X-H^*\log_q (A) H\right)
\]
is jointly concave in $ (X,A). $ Then obviously
\[
\tr X+\tr X^{2-q}L-\tr X^{2-q}\left(\log_q X-H^*\log_q (A) H\right)
\]
is jointly concave in $ (X,A). $
By Proposition 2.5 (i) and \cite[Lemma 2.3]{CLie08}
we then conclude that
\[
\tr \exp_q (L+H^*\log_q (A) H)
\]
is concave in $ A $ for $1\le q\le 2.$
The case for $q>2$ can be proved by a similar argument as above.
If $0\le q<1,$ then the map $ X\to \tr X^{2-q}L $ is convex.
By an argument similar to the proof of Corollary 2.3(i), we obtain that the expression
\[
\tr X-\tr X^{2-q}\left(\log_q X-H^*\log_q (A) H\right)
\]
is jointly concave in $ (X,A). $
Thus,
\[
\tr X-\tr X^{2-q}L-\tr X^{2-q}\left(\log_q X-H^*\log_q (A) H\right)
\]
is jointly concave in $ (X,A). $
By Proposition 2.5 (iii) and \cite[Lemma 2.3]{CLie08},
we then obtain that
\[
\tr \exp_q (-L+H^*\log_q (A) H)
\]
is concave in $ A $ for $0\le q<1.$
\end{proof}

Setting $q=1$ in Corollary 2.6 we obtain:

\begin{corollary}\label{2.7}
Let $ H $ be a contraction, and let $L$ be self-adjoint. The map
\[
A\rightarrow \tr \exp (L+H^*\log (A) H)
\]
is concave in positive definite operators.
\end{corollary}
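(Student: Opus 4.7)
My plan is to obtain Corollary 2.7 by specializing Corollary 2.6 to $q = 1$ and then removing the positivity hypothesis on $L$ by a constant-shift argument. At $q = 1$ the deformed logarithm and exponential reduce to $\log$ and $\exp$, so the first assertion of Corollary 2.6 directly gives that $A \to \tr \exp(L + H^* \log(A) H)$ is concave on positive definite operators whenever $L$ is positive definite and $H$ is a contraction. The only thing that must be done is to loosen the hypothesis on $L$ from positive definite to self-adjoint.

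To do this, I would use that scalar multiples of the identity commute with everything: given an arbitrary self-adjoint $L$, pick any real number $c$ with $c > \|L\|$ so that $L + cI$ is positive definite. Since $cI$ commutes with $H^* \log(A) H$, we have
\begin{equation*}
\exp\bigl((L+cI) + H^* \log(A) H\bigr) = e^{c}\exp\bigl(L + H^* \log(A) H\bigr),
\end{equation*}
and taking the trace gives
\begin{equation*}
\tr \exp\bigl((L+cI) + H^* \log(A) H\bigr) = e^{c}\,\tr \exp\bigl(L + H^* \log(A) H\bigr).
\end{equation*}
By Corollary 2.6 applied to the positive definite operator $L + cI$, the left-hand side is a concave function of $A$. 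Dividing by the positive constant $e^c$ preserves concavity, so the right-hand side is concave in $A$ as well.

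A second, slightly more intrinsic route is to inspect the proof of Corollary 2.6 directly at $q = 1$: in that argument the only place where $L \ge 0$ is used is to ensure that $X \to \tr X^{2-q} L$ is concave; but at $q = 1$ this function becomes the linear map $X \to \tr X L$, which is simultaneously concave and convex for every self-adjoint $L$. Consequently the joint concavity argument of Corollary 2.6 carries through without any positivity assumption, and the conclusion follows via Proposition 2.5 (i) combined with \cite[Lemma 2.3]{CLie08}.

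There is no real obstacle here — the result is genuinely a corollary of the preceding material, with the only subtlety being the observation that the positivity assumption on $L$ in Corollary 2.6 becomes vacuous at $q = 1$, either because of the scalar-shift invariance of the ordinary exponential or because $\tr X L$ is linear.
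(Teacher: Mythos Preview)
Your proposal is correct and follows the paper's own approach, which is simply to set $q=1$ in Corollary~\ref{2.6}. You go further than the paper by explicitly justifying the passage from positive definite $L$ to arbitrary self-adjoint $L$ (via the scalar-shift argument or the observation that $\tr XL$ is linear at $q=1$), a point the paper leaves implicit.
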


\begin{proposition}\label{2.8} Let $H$ be a contraction, and let $ L $ be positive definite. The map
\[
A\rightarrow \tr \exp_q\left(L+H^*\log_r (A) H\right)
\]
is convex in positive definite operators for $q, r \in [2,3]$ with $r\ge q.$
\end{proposition}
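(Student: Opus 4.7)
The plan is to follow the template of Proposition~\ref{2.5}(ii) and Corollary~\ref{2.6}, the only new feature being the presence of two distinct deformation parameters $q$ and $r$. First I would apply Lemma~\ref{2.1} with parameter $q$ to $Y := \exp_q(L+H^*\log_r(A)H)$. To see that this $Y$ is well-defined I need $L+H^*\log_r(A)H > -I/(q-1)$; this follows from $L>0$, the lower bound $\log_r(A) \ge -I/(r-1)$, the contractivity $H^*H \le I$, and crucially the hypothesis $r \ge q$, which yields $I/(r-1) \le I/(q-1)$. For $q \in (2,3]$, Lemma~\ref{2.1} then gives
\[
\tr\exp_q\bigl(L+H^*\log_r(A)H\bigr) \;=\; \min_{X>0}\, F(X,A),
\]
where $F(X,A) = \tr X - \tr X^{2-q}\bigl(\log_q X - L - H^*\log_r(A)H\bigr)$; the boundary case $q=2$ is immediate since $X^{2-q}=I$.

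Next I would expand $F$ using $\log_q X=(X^{q-1}-I)/(q-1)$ and $\log_r A=(A^{r-1}-I)/(r-1)$ and rearrange to
\[
F(X,A) \;=\; \frac{q-2}{q-1}\tr X \;+\; \tr X^{2-q} M \;+\; \frac{1}{r-1}\tr X^{2-q} H^* A^{r-1} H,
\]
with $M := L + I/(q-1) - H^*H/(r-1)$. The same inequalities used above show $M > 0$. Since $2-q \in [-1,0]$, the function $t\mapsto t^{2-q}$ is operator convex on $(0,\infty)$, so $X \mapsto \tr X^{2-q}M$ is convex for $M > 0$; together with the linear term $\tr X$ (whose coefficient $(q-2)/(q-1)$ is non-negative for $q\ge 2$), the first two summands are jointly convex in $(X,A)$ simply because they do not depend on $A$.

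The decisive step is joint convexity in $(X,A)$ of the remaining summand, namely of $(X,A)\mapsto \tr X^{\alpha} H^* A^{\beta} H$ with $\alpha := 2-q \in [-1,0]$ and $\beta := r-1 \in [1,2]$. Here $\alpha+\beta = r-q+1 \ge 1$ precisely because $r \ge q$; at the scalar level the Hessian determinant of $x^\alpha a^\beta$ is proportional to $-\alpha\beta(\alpha+\beta-1)$, which is non-negative in exactly this regime, so the hypothesis $r \ge q$ is the sharp threshold. The matrix version extends Ando's convexity theorem (whose case $\alpha+\beta=1$ was used in Corollary~\ref{2.3}(iii)); it can be invoked as a known result in the spirit of \cite{CFL18,CLie08} or derived by integral interpolation from the $\alpha+\beta=1$ case. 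Granted this joint convexity, $F(X,A)$ is jointly convex in $(X,A)$, and \cite[Lemma 2.3]{CLie08} yields convexity of $A\mapsto \min_{X>0}F(X,A) = \tr\exp_q(L+H^*\log_r(A)H)$ in positive definite $A$.

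The principal obstacle is thus the matrix joint convexity of $(X,A)\mapsto \tr X^{\alpha}H^*A^{\beta}H$ in the extended parameter regime $\alpha+\beta \ge 1$; every other step is a routine adaptation of the proofs already established in the preceding sections.
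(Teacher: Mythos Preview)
Your proposal is correct and follows essentially the same route as the paper: verify $L+H^*\log_r(A)H>-(q-1)^{-1}$ from $L>0$, $H^*H\le I$ and $r\ge q$, apply Lemma~\ref{2.1} with parameter $q$ to obtain the $\min$-representation, expand into a linear term, $\tr X^{2-q}M$ with $M>0$, and $\tfrac{1}{r-1}\tr X^{2-q}H^*A^{r-1}H$, and then invoke joint convexity of the last term together with \cite[Lemma~2.3]{CLie08}. The only nuance is that the paper simply cites ``Ando's convexity theorem'' for the joint convexity of $(X,A)\mapsto\tr X^{2-q}H^*A^{r-1}H$ with $2-q\in[-1,0]$, $r-1\in[1,2]$ and $(2-q)+(r-1)\ge 1$, whereas you (correctly) flag that this is the extended version with $\alpha+\beta\ge 1$ rather than $\alpha+\beta=1$; this extension is indeed known (e.g.\ via the results surveyed in \cite{CFL18}), so there is no genuine gap.
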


\begin{proof}
Since  $H$ is a contraction and $r\ge q\ge 2$ we obtain the inequalities
\[
H^*\log_r (A)H\ge \frac{-H^* H}{r-1}\ge \frac{-1}{r-1}\ge \frac{-1}{q-1}\,.
\]
We may thus apply the deformed exponential and set $Y=\exp_q\left(L+H^*\log_r (A) H\right)$ in Lemma 2.1 to obtain
\begin{eqnarray*}
&&\tr \exp_q (L+H^*\log_r (A) H)\\[1ex]&=&
\min_{X>0}\bigl\{\tr X+\tr X^{2-q}L - \tr X^{2-q} \left(\log_q X-H^*\log_r (A) H\right)\bigr\}
\\&=&
\min_{X>0}\Bigl\{\Bigl(1-\frac{1}{q-1}\Bigr)\tr X+\tr X^{2-q}\Bigl(L+\frac{1}{q-1}-\frac{H^* H}{r-1}\Bigr)+\frac{1}{r-1}\tr X^{2-q}H^* A^{r-1} H \Bigr\},
\end{eqnarray*}
where by the assumptions
\[
2-q\in[-1,0],\qquad r-1\in [1,2],\qquad (r-1)+(2-q)\ge 1.
\]
By Ando's convexity theorem and  \cite[Lemma 2.3]{CLie08} we then get the desired conclusions.
\end{proof}

\section{Variational expressions related to Tsallis relative entropy}

\begin{theorem}\label{3.1}
Let $H$ be a contraction. For positive definite operators $X$ and $A$ the following assertions hold:
\begin{enumerate}

\item[(i)] For $1\le q\le 2$  we have the equality
\begin{eqnarray*}
&& \tr X^{2-q}\left(\log_q X-H^* \log_q(A)H\right)
\\[1ex]&=&
\max_{L>-H^*\log_q (A) H-(q-1)^{-1}}\left\{\tr X +\tr X^{2-q}L-\tr \exp_q (L+H^*\log_q (A) H)\right\}.
\end{eqnarray*}

\item[(ii)] For $q>2$  we have the equality
\begin{eqnarray*}
&& \tr X^{2-q}\left(\log_q X-H^* \log_q(A)H\right)
\\[1ex]&=&
\min_{L>-H^*\log_q (A) H-(q-1)^{-1}}\left\{\tr X +\tr X^{2-q}L-\tr \exp_q (L+H^*\log_q (A) H)\right\}.
\end{eqnarray*}

\item[(iii)] For $q<1$ we have the equality
\begin{eqnarray*}
&& \tr X^{2-q}\left(\log_q X-H^* \log_q(A)H\right)
\\[1ex]&=&
\max_{L<-H^*\log_q (A) H-(q-1)^{-1}}\left\{\tr X +\tr X^{2-q}L-\tr \exp_q (L+H^*\log_q (A) H)\right\}.
\end{eqnarray*}
\end{enumerate}
\end{theorem}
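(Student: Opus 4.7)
The plan is to invert Proposition 2.5 by a standard convex-duality argument, in the same spirit as how equation (1.2) of the introduction is derived from Tropp's representation. Fix positive definite $ X $ and $ A $, and let $ L $ range over the admissible self-adjoint operators specified in each of the three cases.

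Proposition 2.5 expresses $ \tr\exp_q(L+H^*\log_q(A)H) $ as the maximum over $ X>0 $ in cases (i) and (iii), and as the minimum over $ X>0 $ in case (ii), of the quantity
\[
F(X,L) \;=\; \tr X + \tr X^{2-q}L - \tr X^{2-q}\bigl(\log_q X - H^*\log_q(A)H\bigr).
\]
Specialising this optimisation to the particular $ X $ fixed at the outset yields $ \tr\exp_q(L+H^*\log_q(A)H) \ge F(X,L) $ in cases (i) and (iii), and the reverse inequality in case (ii). Rearranging each of these to isolate $ \tr X^{2-q}(\log_q X - H^*\log_q(A)H) $ immediately gives the $ \ge \sup_L $ bound in (i) and (iii), and the $ \le \inf_L $ bound in (ii). To show these bounds are attained, I would substitute the natural candidate
\[
L_0 \;=\; \log_q X - H^*\log_q(A)H,
\]
which is self-adjoint. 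Then $ L_0 + H^*\log_q(A)H = \log_q X $, so $ \exp_q(L_0+H^*\log_q(A)H) = X $; the two occurrences of $ \tr X $ in $ F(X,L_0) $ cancel, and what remains is precisely the left-hand side of the theorem.

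The only non-formal step is the admissibility check for $ L_0 $, that is, verifying $ \log_q X > -(q-1)^{-1} $ when $ q>1 $ and $ \log_q X < -(q-1)^{-1} $ when $ q<1 $. Both follow from the identity $ \log_q X = (X^{q-1}-1)/(q-1) $ and the fact that $ X^{q-1}>0 $ for positive definite $ X $: the former gives $ \log_q X > -(q-1)^{-1} $ directly, while for $ q<1 $ the same identity reads $ \log_q X = (1-X^{q-1})/(1-q) < 1/(1-q) = -(q-1)^{-1} $. Hence $ L_0 $ lies strictly inside the admissible set prescribed in each case and realises equality, converting the supremum into a maximum in (i) and (iii) and the infimum into a minimum in (ii). I expect this domain verification to be the only point requiring attention; everything else is a textbook Legendre-type inversion of Proposition 2.5.
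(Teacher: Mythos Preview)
Your proposal is correct and follows essentially the same route as the paper: invoke Proposition~2.5 to bound $G(L)=\tr X+\tr X^{2-q}L-\tr\exp_q(L+H^*\log_q(A)H)$ by the Tsallis-type expression, then plug in $L_0=\log_q X-H^*\log_q(A)H$ to achieve equality. The paper additionally remarks that $G(L)$ is concave in cases (i) and (iii), but this observation is not actually used in the argument; your explicit verification that $L_0$ satisfies the admissibility constraint is a detail the paper leaves implicit.
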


\begin{proof}
Under the assumptions in $(i)$ and the natural condition
\begin{eqnarray*}
L+H^*\log_q (A) H>-\frac{1}{q-1},
\end{eqnarray*}
ensuring that $\exp_q (L+H^*\log_q (A) H)$ makes sense,
 we set
\[
G(L)=\tr X +\tr X^{2-q}L-\tr \exp_q (L+H^*\log_q (A) H)
\]
and obtain that $G(L)$ is concave.
By Proposition 2.5 (i), we then obtain the inequality
\[
G(L)\le \tr X^{2-q}\left(\log_q X-H^* \log_q(A)H\right).
\]
Inserting $L_0=\log_q X-H^* \log_q (A) H $  yields
\[
G(L_0)=\tr X^{2-q}\left(\log_q X-H^* \log_q(A)H\right),
\]
such that $G(L)$ attains its maximum in $L_0.$
Thus we obtain
\[
\max_{L>-H^*\log_q (A) H-(q-1)^{-1}}G(L)= \tr X^{2-q}\left(\log_q X-H^* \log_q(A)H\right),
\]
which proves $ (i). $
The case $(ii)$  can be proved by a similar argument.
Under the assumptions in $(iii),$ and the condition
\[
L+H^*\log_q (A) H<-\frac{1}{q-1}\,,
\]
 we set
\[
G(L)=\tr X+\tr X^{2-q}L-\tr \exp_q (L+H^*\log_q (A) H)
\]
and obtain that $G(L)$ is concave.
By Proposition 2.5 (iii), we then obtain the inequality
\[
G(L)\le \tr X^{2-q}\left(\log_q X-H^* \log_q(A)H\right).
\]
Inserting $L_0=\log_q X-H^* \log_q (A) H $ yields
\[
G(L_0)=\tr X^{2-q}\left(\log_q X-H^* \log_q(A)H\right)
\]
such that $G(L)$ attains its maximum in $L_0.$ Hence
\[
\max_{L<-H^*\log_q (A) H-(q-1)^{-1}}G(L)= \tr X^{2-q}\left(\log_q X-H^* \log_q(A)H\right),
\]
which proves $ (iii). $
\end{proof}

Setting $H=I$ we obtain in particular

\begin{corollary}\label{3.2}
The equality
\begin{eqnarray}
&& D_{2-q}(X|A)
=
\tr X^{2-q}\left(\log_q X- \log_q A \right)\nonumber
\\[0.5ex]&=&
\max_{L+\log_q A>-(q-1)^{-1}}\bigl\{\tr X +\tr X^{2-q}L-\tr \exp_q (L+\log_q A )\bigr\}
\end{eqnarray}
holds for $q\in[1,2].$
\end{corollary}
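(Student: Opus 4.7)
The plan is to deduce this Corollary directly from Theorem 3.1 (i) by taking the contraction to be the identity $H = I$, while recognising the first displayed equality as the definition of the Tsallis relative entropy.

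First I would point to the definition of $D_p(X\mid Y)$ recalled in the introduction, namely
\[
D_p(X\mid Y) = \tr X^p(\log_{2-p} X - \log_{2-p} Y).
\]
Setting $p = 2-q$ with $q \in [1,2]$, this reads
\[
D_{2-q}(X\mid A) = \tr X^{2-q}(\log_q X - \log_q A),
\]
which establishes the first equality in the Corollary.

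Next I would apply Theorem 3.1 (i), valid precisely for $1 \le q \le 2$, with $H = I$. The side condition $L + H^*\log_q(A) H > -(q-1)^{-1}$ collapses to $L + \log_q A > -(q-1)^{-1}$, exactly the constraint on the index set of the maximum in the Corollary; and the trace expression $\tr X^{2-q}(\log_q X - H^*\log_q(A) H)$ on the left-hand side of Theorem 3.1 (i) becomes $\tr X^{2-q}(\log_q X - \log_q A)$. The variational formula obtained is then precisely what the Corollary asserts.

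No substantive obstacle arises in this reduction; the argument is a clean specialisation of the theorem, and in particular the maximum is attained at $L_0 = \log_q X - \log_q A$ inherited from the proof of Theorem 3.1. The only boundary value worth commenting on is $q = 1$, where the constraint $L + \log A > -\infty$ is automatic for every self-adjoint $L$, both $\log_q$ and $\exp_q$ reduce to the classical logarithm and exponential, and the identity recovers the Hiai--Petz variational representation for the Umegaki relative entropy quoted as $(1.6)$ in the introduction.
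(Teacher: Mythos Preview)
Your proof is correct and follows exactly the paper's approach: the paper derives Corollary~3.2 simply by setting $H=I$ in Theorem~3.1~(i). One small inaccuracy in your closing remark: at $q=1$ the identity recovers the representation $(1.3)$ (the inversion of Tropp's formula), not the Hiai--Petz representation $(1.6)$, which involves $\log\tr\exp(L+\log A)$ and the restriction $\tr X=1$.
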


Corollary 3.2 may be considered as a variational representation of the Tsallis relative entropy. For $q=1$ we recover the well-known representation
\begin{eqnarray*}
D(X|A)
&=&
\tr X\left(\log X- \log A \right)\nonumber
\\[0.5ex]&=&
\max_{L}\left\{\tr X +\tr X L-\tr \exp (L+\log A )\right\},
\end{eqnarray*}
where the supremum is taken over self-adjoint $ L. $

\section{Variant representations related to Tsallis relative entropy}

In this section, we generalize the Gibbs variational representations and the variational representations in terms of the quantum relative entropy obtained by Hiai and Petz \cite{HP93}. We recall the Peierls-Bogolyubov type inequalities for deformed exponentials and quote from \cite[Theorem 7]{HLS17}.
\begin{lemma}\label{Peierls-Bogolyubov type inequalities}
Let $A $ and $ B $ be self-adjoint $ n\times n $ matrices. The following assertions hold:
\begin{enumerate}
\item[(i)]  If $q<1,$ and both $A$ and $A+B$ are bounded from above by $-(q-1)^{-1},$ then
\[
\log_q\tr\exp_q(A+B)-\log_q\tr\exp_q A
\ge
\bigl(\tr \exp_q A\bigr)^{q-2}\tr (\exp_q A)^{2-q}B.
\]
\item[(ii)]  If $1<q\le 2,$ and both $A$ and $A+B$ are bounded from below by $-(q-1)^{-1},$ then
\[
\log_q\tr\exp_q(A+B)-\log_q\tr\exp_q A
\ge
\bigl(\tr \exp_q A\bigr)^{q-2}\tr (\exp_q A)^{2-q}B.
\]
\item[(iii)]  If $q\ge 2,$ and both $A$ and $A+B$ are bounded from below by $-(q-1)^{-1},$ then
\[
\log_q\tr\exp_q(A+B)-\log_q\tr\exp_q A
\le
\bigl(\tr \exp_q A\bigr)^{q-2}\tr (\exp_q A)^{2-q}B.
\]
\end{enumerate}
\end{lemma}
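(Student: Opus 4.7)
The plan is to reduce each of the three cases to a tangent-line estimate for the scalar function
\[
g(t) = \log_q \tr \exp_q(A + tB),\qquad t \in [0,1],
\]
which is well defined because the one-sided spectral bound $-(q-1)^{-1}$ imposed on $A$ and on $A + B$ is preserved by the convex combinations $A + tB$.

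First I would compute $g'(0)$ and check that it coincides with the right-hand side of the three inequalities. Using the scalar identity $(\exp_q)'(x) = (\exp_q x)^{2-q}$ together with the cyclic-trace formula $\frac{d}{dt}\tr f(A+tB) = \tr f'(A+tB)B$, one obtains
\[
\frac{d}{dt}\tr\exp_q(A+tB) = \tr\bigl(\exp_q(A+tB)\bigr)^{2-q} B,
\]
and composing with $(\log_q)'(u) = u^{q-2}$ yields
\[
g'(t) = \bigl(\tr\exp_q(A+tB)\bigr)^{q-2}\,\tr\bigl(\exp_q(A+tB)\bigr)^{2-q} B.
\]
In particular $g'(0)$ equals the right-hand side appearing in (i)--(iii), so the three claims amount precisely to the tangent-line bounds $g(1) - g(0) \ge g'(0)$ in (i) and (ii), and $g(1) - g(0) \le g'(0)$ in (iii).

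It therefore suffices to prove that $g$ is convex on $[0,1]$ in cases (i) and (ii) and concave on $[0,1]$ in case (iii). My preferred route is a Gibbs-type variational representation of $L \mapsto \log_q \tr \exp_q L$. In the range $1 \le q \le 2$ this is immediate from Furuichi's formula $(1.7)$ specialised to $A = I$, which exhibits $\log_q \tr \exp_q L$ as a supremum of affine functionals of $L$ and is therefore convex; the convexity of $g$ then follows by pulling back along the affine map $t \mapsto A + tB$. For $0 \le q < 1$ I would aim for an analogous supremum-of-affine-functions representation, in the spirit of the variational formulas developed in this paper, again giving convexity and hence (i). For $q > 2$ the signs of $q - 1$ and $2 - q$ reverse and the variational formula should become an infimum of affine functionals of $L$, which is concave in $L$, delivering (iii).

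The main obstacle is establishing the correct variational representation in the non-standard regimes $q < 1$ and $q > 2$. The sign reversals mean that the tracial Young-type inequalities underpinning Lemma~2.1 change direction, so that maxima become minima and the admissible range for $L$ is flipped; one must track these inversions carefully in order to land on the correct convexity statement for $g$. As a fall-back one could differentiate $g$ twice and try to recognise $g''(t)$ as a variance-type quantity against a Kubo--Mori-like inner product attached to $\exp_q(A+tB)^{2-q}$, but controlling the sign of this variance uniformly across the three parameter ranges appears more delicate than the variational route.
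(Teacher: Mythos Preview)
The paper does not prove this lemma at all; it is quoted verbatim from \cite[Theorem~7]{HLS17}.  So there is no in-paper argument to compare your proposal against.

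Your reduction is correct: the right-hand side in each case is exactly $g'(0)$ for $g(t)=\log_q\tr\exp_q(A+tB)$, and the spectral bound on $A$ and $A+B$ is preserved along the segment, so the three inequalities are indeed the tangent-line bounds $g(1)-g(0)\gtrless g'(0)$ for a convex/concave $g$.  This is precisely how one proves Peierls--Bogolyubov type inequalities, and is almost certainly what \cite{HLS17} does.

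There is, however, a circularity you must avoid.  For the convexity/concavity of $L\mapsto\log_q\tr\exp_q L$ you propose to use the Gibbs-type variational representations ``developed in this paper''.  Those are exactly Theorem~\ref{4.2}, and the proof of Theorem~\ref{4.2} in this paper \emph{uses} Lemma~\ref{Peierls-Bogolyubov type inequalities} as its starting point (``By $(ii)$ of Lemma~4.1 we thus obtain\dots'').  So you cannot invoke Theorem~\ref{4.2} here.  For $1<q\le 2$ your appeal to Furuichi's independent result $(1.7)$ is fine, but for $q<1$ and $q>2$ you need an argument that does not pass through Theorem~\ref{4.2}.  A clean independent route is to observe that
\[
\log_q\tr\exp_q L=\frac{1}{q-1}\Bigl[\bigl(\tr M^{1/(q-1)}\bigr)^{q-1}-1\Bigr],\qquad M=(q-1)L+1>0,
\]
and then invoke the known convexity/concavity of the Schatten functional $M\mapsto(\tr M^{p})^{1/p}$ on positive definite matrices in the appropriate ranges of $p=1/(q-1)$; this avoids any reliance on the results of Section~4.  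Your fall-back idea of computing $g''$ directly also works and is essentially equivalent.
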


Using these Peierls-Bogolyubov type inequalities we obtain:

\begin{theorem}\label{4.2}
The following variational representations hold:
\begin{enumerate}
\item[(i)]  If $q<1,$ then for $L< -(q-1)^{-1},$
\[
\log_q\tr \exp_q L=\max_{X>0, \,\tr X=1}\left\{\tr X^{2-q}L-\tr X^{2-q}\log_q X  \right\},
\]
and for $X>0$ with $\tr X=1,$
\[
\tr X^{2-q}\log_q X=\max_{L< -(q-1)^{-1}}\left\{\tr X^{2-q}L-\log_q\tr \exp_q L \right\}.
\]
\item[(ii)]  If $1<q\le 2,$ then for $L> -(q-1)^{-1},$
\[
\log_q\tr \exp_q L=\max_{ X>0, \,\tr X=1}\left\{\tr X^{2-q}L-\tr X^{2-q}\log_q X \right\},
\]
and for $X>0$ with $\tr X=1,$
\[
\tr X^{2-q}\log_q X=\max_{ L> -(q-1)^{-1}}\left\{\tr X^{2-q}L-\log_q\tr \exp_q L \right\}.
\]
\item[(iii)]  If $q>2,$ then for $L> -(q-1)^{-1},$
\[
\log_q\tr \exp_q L=\min_{X>0, \,\tr X=1}\left\{\tr X^{2-q}L-\tr X^{2-q}\log_q X \right\},
\]
and for $X>0$ with $\tr X=1,$
\[
\tr X^{2-q}\log_q X=\min_{L> -(q-1)^{-1}}\left\{\tr X^{2-q}L-\log_q\tr \exp_q L \right\}.
\]
\end{enumerate}
\end{theorem}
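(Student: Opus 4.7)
The plan is to derive each of the six equalities in Theorem 4.2 from the Peierls-Bogolyubov type inequality of Lemma 4.1, matching the direction of the variational bound (max or min) to the direction of that inequality in the corresponding parameter regime. For a fixed admissible $L$ and any positive definite $X$ with $\tr X = 1$, I would apply Lemma 4.1 with $A = \log_q X$ and $B = L - A$. Since $\exp_q(\log_q X) = X$ and $\tr X = 1$, the term $\log_q \tr \exp_q A$ vanishes and the inequality collapses to
\[
\log_q\tr\exp_q L \;\ge\; \tr X^{2-q} L \,-\, \tr X^{2-q}\log_q X
\]
in cases (i) and (ii), with the direction reversed in (iii). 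The admissibility conditions $\log_q X \gtrless -(q-1)^{-1}$ demanded by Lemma 4.1 are automatic in each regime, reducing to the trivial bound $X^{q-1} > 0$.

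To show that equality is attained I propose the extremizer $X^{\star} = Z^{-1}\exp_q L$ with $Z = \tr\exp_q L$. The key auxiliary identity is $\log_q(ab) = \log_q a + a^{q-1}\log_q b$, which passes from scalars to commuting positive operators by the functional calculus. Applied with $a = Z^{-1}$ and $b = \exp_q L$, it yields $\log_q X^{\star} = \log_q(Z^{-1}) + Z^{-(q-1)} L$. Substituting this together with $(\exp_q L)^{q-1} = (q-1)L + 1$, and telescoping $(\exp_q L)^{2-q}((q-1)L + 1) = \exp_q L$ under the trace, reduces the expression $\tr X^{\star\,(2-q)}L - \tr X^{\star\,(2-q)}\log_q X^{\star}$ to $-Z^{q-1}\log_q(Z^{-1})$, which is precisely $\log_q Z$. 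This settles the first equality in each of the three cases.

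The second equality follows by duality. The first equality rearranges to
\[
\tr X^{2-q} L - \log_q\tr\exp_q L \;\le\; \tr X^{2-q}\log_q X
\]
for every admissible $L$ (with the direction reversed in (iii)), so it suffices to exhibit one $L$ at which equality holds. The choice $L^{\star} = \log_q X$ works: admissibility holds for the same reason as above, and since $\exp_q L^{\star} = X$ has trace $1$, the term $\log_q\tr\exp_q L^{\star}$ vanishes, turning the inequality into the desired equality.

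I expect the main obstacle to be the equality-attaining computation in the first step, since that is the only place requiring substantive manipulation of $\exp_q$ and $\log_q$ beyond a formal invocation of the Peierls-Bogolyubov lemma. The three parameter regimes differ only in the direction of the inequality and in the sign of $-(q-1)^{-1}$, so once the identity $\log_q(ab) = \log_q a + a^{q-1}\log_q b$ is at our disposal the algebraic reduction proceeds uniformly across all cases.
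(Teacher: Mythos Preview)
Your proposal is correct and follows essentially the same route as the paper's proof: both apply Lemma~4.1 with $A=\log_q X$ and $B=L-\log_q X$ to obtain the key inequality, then verify attainment at $L^\star=\log_q X$ and at $X^\star=(\tr\exp_q L)^{-1}\exp_q L$ using the same multiplicative identity for $\log_q$. Your algebraic reduction via $(q-1)L+1=(\exp_q L)^{q-1}$ is a slightly cleaner packaging of the paper's computation, but the argument is the same in substance.
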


\begin{proof} We just prove the case of $1<q\le 2.$ For $X>0$ with $ \tr X=1$ and setting $A=\log_q X$ we have $\tr\exp_q A=1.$
By $(ii)$ of Lemma 4.1 we thus obtain
\[
\tr X^{2-q}B\le \log_q\tr \exp_q(\log_q X+B),
\]
which holds for $X>0$ with $\tr X=1$ and $B$ with $\log_q X+B>-(q-1)^{-1}.$ Replacing $B$ with $L-\log_q X$ yields
\begin{eqnarray}
\tr X^{2-q}L\le \log_q\tr \exp_q L+\tr X^{2-q}\log_q X,
\end{eqnarray}
which is valid for $X>0$ with $\tr X=1$ and $L>-(q-1)^{-1}.$
It is easy to see that for a fixed $X,$ there is equality in $(4.1)$ for $L=\log_q X.$ We thus obtain
\[
\tr X^{2-q}\log_q X=\max_{ L>-(q-1)^{-1}}\left\{\tr X^{2-q}L-\log_q\tr \exp_q L \right\}.
\]
By an elementary calculation we obtain the equalities
\[
\log_q{\frac{y}{x}}=\log_q y + y^{q-1}\log_q{\frac{1}{x}},
\]
and
\[
\log_q{\frac{1}{x}}=-x^{1-q}\log_q x
\]
for $q\in\mathbf R.$ Therefore,
\begin{eqnarray*}
\log_q\frac{\exp_q L}{\tr \exp_qL}&=&\log_q\exp_qL+(\exp_q L)^{q-1}\log_q\frac{1}{\tr \exp_qL}\\
&=&
L-(\exp_q L)^{q-1}(\tr \exp_q L)^{1-q}\log_q\tr \exp_q L.
\end{eqnarray*}
It follows that
\begin{eqnarray*}
&& \log_q \tr \exp_q L+\tr \left(\frac{\exp_q L}{\tr \exp_q L}\right)^{2-q}\log_q \left(\frac{\exp_q L}{\tr \exp_q L}\right)
\\
&=&
\log_q\tr \exp_q L+\frac{\tr (\exp_q L)^{2-q}L}{(\tr \exp_q L)^{2-q}}-\frac{\tr \exp_q L (\tr \exp_q L)^{1-q}\log_q \tr \exp_q L}{(\tr \exp_q L)^{2-q}}
\\
&=&
\tr\left(\frac{\exp_q L}{\tr \exp_q L}\right)^{2-q} L.
\end{eqnarray*}
For a fixed $L$ we therefore have equality in $(4.1)$ for $X=(\tr \exp_q L)^{-1} \exp_q L.$
Hence,
\[
\log_q\tr \exp_q L=\max_{ X>0, \,\tr X=1}\left\{\tr X^{2-q}L-\tr X^{2-q}\log_q X \right\}
\]
for $L> -(q-1)^{-1}.$
The cases for $q<1$ and $q>2$ are proved by similar reasoning.
\end{proof}

By setting $L=L+H^*\log_q(Y)H$ in Theorem 4.2 we obtain:

\begin{theorem} \label{4.3}
Assume $H^*H=1.$ The following assertions hold:
\begin{enumerate}
\item[(i)]  If $q<1,$ then for $L\le 0$ we have the equality
\begin{eqnarray*}
&&\log_q\tr \exp_q (L+H^*\log_q YH)
\\&=&\max_{X>0, \,\tr X=1}\left\{\tr X^{2-q}L-\tr X^{2-q}\left(\log_q X-H^*\log_q YH\right) \right\},
\end{eqnarray*}
and for $X>0$ with $\tr X=1$ the equality
\begin{eqnarray*}
&&\tr X^{2-q}\left(\log_q X-H^*\log_q Y H\right)
\\&=&\max_{L\le 0}\left\{\tr X^{2-q}L-\log_q\tr \exp_q (L+H^*\log_q YH) \right\}.
\end{eqnarray*}

\item[(ii)]  If $1<q\le 2,$ then for $L\ge 0$ we have the equality
\begin{eqnarray*}
&&\log_q\tr \exp_q (L+H^*\log_q YH)
\\&=&\max_{ X>0, \,\tr X=1}\left\{\tr X^{2-q}L-\tr X^{2-q}\left(\log_q X-H^*\log_q YH\right) \right\},
\end{eqnarray*}
and for $X>0$ with $\tr X=1$ the equality
\begin{eqnarray*}
&&\tr X^{2-q}\left(\log_q X-H^*\log_q YH\right)
\\&=&\max_{L\ge 0}\left\{\tr X^{2-q}L-\log_q\tr \exp_q (L+H^*\log_q YH)\right\}.
\end{eqnarray*}

\item[(iii)]  If $q>2,$ then for $L\ge 0$ we have the equality
\begin{eqnarray*}
&&\log_q\tr \exp_q (L+H^*\log_q YH)
\\&=&\min_{X>0, \,\tr X=1}\left\{\tr X^{2-q}L-\tr X^{2-q}\left(\log_q X-H^*\log_q YH \right)\right\},
\end{eqnarray*}
and for $X>0,$ $\tr X=1$ the equality
\begin{eqnarray*}
&&\tr X^{2-q}\left(\log_q X-H^*\log_q YH\right)
\\&=&\min_{L\ge 0}\left\{\tr X^{2-q}L-\log_q\tr \exp_q (L+H^*\log_q YH)\right\}.
\end{eqnarray*}
\end{enumerate}
\end{theorem}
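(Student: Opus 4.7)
The plan is to deduce Theorem 4.3 from Theorem 4.2 by the substitution $L \mapsto L + H^*\log_q(Y) H$, as suggested by the remark preceding the statement. The first step is to confirm that, under the hypotheses on $L$ in each case, the composite operator $L + H^*\log_q(Y)H$ lies in the natural domain of $\exp_q$. For $q > 1$, writing $\log_q(Y) = (Y^{q-1} - I)/(q-1)$ and using $Y^{q-1} > 0$, one obtains $\log_q(Y) > -(q-1)^{-1} I$; hence, by $H^*H = I$, $H^*\log_q(Y)H > -(q-1)^{-1} I$, and adding $L \geq 0$ preserves the bound. The case $q < 1$ is symmetric with $L \leq 0$ and the inequality reversed.

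For the first identity in each case, apply the first equation of the corresponding case of Theorem 4.2 with argument $L + H^*\log_q(Y)H$ in place of $L$. Expanding $\tr X^{2-q}(L + H^*\log_q(Y)H) = \tr X^{2-q}L + \tr X^{2-q}H^*\log_q(Y)H$ and regrouping with $-\tr X^{2-q}\log_q X$ yields the stated identity; the extremum over $X > 0$ with $\tr X = 1$ carries over directly, and no further constraint analysis is needed since the domain condition on $L$ is uniform in $X$.

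For the second identity, apply the second equation of the corresponding case of Theorem 4.2 and substitute $L' = L + H^*\log_q(Y)H$. After rearrangement one obtains the claim with the extremum taken over all $L$ with $L + H^*\log_q(Y)H$ in the natural domain of $\exp_q$; the stated bound $L \geq 0$ (respectively $L \leq 0$) is a convenient sufficient condition for this. The inequality direction follows directly from Lemma 4.1 applied with $A = \log_q X$ (so $\tr \exp_q A = 1$) and $B = L + H^*\log_q(Y)H - \log_q X$. The main obstacle I anticipate is verifying attainment in the restricted domain, since the naive optimizer $L = \log_q X - H^*\log_q(Y)H$ obtained from the equality case $L' = \log_q X$ in Theorem 4.2 need not satisfy the stated bound. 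This can be overcome by noting that, for $A = \log_q X$ with $\tr X = 1$, equality in the Peierls-Bogolyubov bound holds along an entire one-parameter family of choices characterized by $\exp_q(L + H^*\log_q(Y)H) = c X$ for $c > 0$; a short spectral estimate shows that for $c$ sufficiently large the resulting $L$ satisfies $L \geq 0$ (respectively $L \leq 0$), so the extremum is indeed realized within the restricted domain in every case.
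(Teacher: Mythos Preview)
Your approach is exactly the paper's: the authors prove Theorem 4.3 in a single line, writing ``By setting $L=L+H^*\log_q(Y)H$ in Theorem 4.2 we obtain'' the result. For the first identity in each case your substitution argument is correct and matches this.

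For the second identity you have gone further than the paper, correctly identifying a subtlety the authors pass over in silence: after substitution the optimization from Theorem 4.2 is over all $L$ with $L+H^*\log_q(Y)H$ in the natural domain of $\exp_q$, whereas Theorem 4.3 restricts to $L\ge 0$ (respectively $L\le 0$), so attainment in the smaller domain must be checked. Your resolution via the one-parameter family $\exp_q(L+H^*\log_q(Y)H)=cX$ is correct; indeed a direct computation using $\tr X=1$ shows that equality in (4.1) holds at $L'=\log_q(cX)$ for every $c>0$, not only at $c=1$. One minor slip: in case (i) with $q<1$ you need $c$ sufficiently \emph{small} (so that $\log_q(cX)\to -\infty$) rather than large to force $L\le 0$; for $q>1$ taking $c$ large works as you say. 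With that correction your argument is sound and in fact more complete than the paper's own.
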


\begin{remark} \label{4.4}
We note that the cases $1\le q\le 2$ in Theorem 4.2 and $1\le q\le 2$ with $H=I$ in Theorem 4.3 was first obtained by Furuichi in \cite{Fur06}, who gave a different proof.
Note also that when $q\rightarrow 1,$ we recover Gibbs' variational principle for the von Neumann entropy $S(X)=-\tr X \log X $
together with the variational representations related to the quantum relative entropy obtained by Hiai and Petz,  when $L$ is self-adjoint.
Moreover, we can derive  convexity or concavity of the map
\[
Y\rightarrow \log_q\tr \exp_q (L+H^*\log_q YH)
\]
by using the joint convexity or concavity of the Tsallis entropy type functionals
\[(X,Y)\rightarrow \tr X^{2-q}\left(\log_q X-H^*\log_q YH\right),
\]
which in turn recovers the Peierls-Bogolyubov type inequalities for deformed exponentials. Note that the joint convexity or concavity for the Tsallis entropy type functionals can be traced back to Lieb's concavity theorem and Ando's convexity theorem, as demonstrated in Corollary 2.3.
\end{remark}

Now we consider two types of variational expressions with and without the restriction $\tr X=1.$
A special case of Theorem 2.2 states that for positive numbers $s$ and $\lambda,$
\begin{eqnarray}
\exp_q s=\max_{\lambda>0}\left\{\lambda-\lambda^{2-q}(\log_q \lambda-s)\right\},  \quad q \le 2,
\end{eqnarray}
and
\begin{eqnarray}
\exp_q s=\min_{\lambda>0}\left\{\lambda-\lambda^{2-q}(\log_q \lambda-s)\right\}, \quad q>2,
\end{eqnarray}
which may be viewed as Legendre-Fenchel type dualities for deformed exponentials.
The inequalities $(4.2)$ and $(4.3)$ may also easily be obtained from the scalar Young's inequality and its reverse inequality.
We now recover Theorem 2.2 from Theorem 4.3 and the above scalar Legendre-Fenchel dualities.
For $1\le q\le 2$ and
by using Theorem 4.3 $(ii),$ we obtain
\begin{eqnarray*}
&&\max_{X>0}\left\{\tr X +\tr X^{2-q} L-\tr X^{2-q}\left(\log_q X-H^*\log_q YH\right)\right\}
\\
&=&
\max_{\lambda>0}\max_{\bar{X}>0, \,\tr \bar{X}=1}\left\{\tr \lambda \bar{X} +\tr (\lambda \bar{X})^{2-q} L-\tr (\lambda \bar{X})^{2-q}\left(\log_q (\lambda \bar{X})-H^*\log_q YH\right)\right\}
\\
&=&
\max_{\lambda>0}\max_{\bar{X}>0, \tr \,\bar{X}=1}\left\{\lambda^{2-q}\left(\tr \bar{X}^{2-q}L-\tr \bar{X}^{2-q}(\log_q \bar{X}- H^*\log_q YH)\right)
+\lambda - \lambda^{2-q}\log_q \lambda\right\}
\\
&=&
\max_{\lambda>0}\left\{\lambda^{2-q}\log_q\tr \exp_q (L+H^*\log_q YH)+\lambda- \lambda^{2-q}\log_q \lambda\right\}\\
&=&
\tr \exp_q(L+H\log_q YH),
\end{eqnarray*}
where the last equality follows from $(4.2).$
The cases for $q<1$ and $q>2$ are proved by similar reasoning.

\section{Golden-Thompson's inequality for deformed exponentials}

The second author generalized Golden-Thompson's trace inequality to $q$-exponentials with deformation parameter $q\in[1,3].$ We will now address the same question for parameter values $ q\in [0,1). $
The following result is an easy consequence of Corollary 2.3.

\begin{corollary}\label{5.1} Let $H_1, \ldots, H_k$ be matrices with $H_1^*H_1+\cdots+H_k^*H_k=1.$ The function
\begin{eqnarray}
\varphi(A_1,\ldots, A_k)=\tr \exp_q\left(\sum_{i=1}^{k}H_i^*\log_q(A_i)H_i\right),
\end{eqnarray}
defined in $k$-tuples of positive definite matrices, is concave for $q\in[0,1).$
\end{corollary}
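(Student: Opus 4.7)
The plan is to reduce the multivariate statement to the single-variable concavity already established in Corollary~\ref{2.3}~(i), via the standard block-matrix device. Concretely, I would introduce the block diagonal positive definite operator $A = \mathrm{diag}(A_1,\ldots,A_k)$ and assemble the $H_i$ into a single operator
$$
H = \begin{pmatrix} H_1 \\ \vdots \\ H_k \end{pmatrix},
$$
regarded as a map from the underlying Hilbert space into its $k$-fold direct sum. The hypothesis $\sum_{i=1}^k H_i^*H_i = 1$ becomes $H^*H = 1$, so $H$ is an isometry and, in particular, a contraction. Because $A$ is block diagonal, $\log_q A = \mathrm{diag}(\log_q A_1,\ldots,\log_q A_k)$, and a direct block computation gives
$$
H^*\log_q(A)\,H = \sum_{i=1}^k H_i^*\log_q(A_i)H_i,
$$
so that $\varphi(A_1,\ldots,A_k) = \tr\exp_q\bigl(H^*\log_q(A)H\bigr)$.

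Next I would invoke Corollary~\ref{2.3}~(i) with this single contraction $H$, which yields that the single-variable map $A\mapsto \tr\exp_q(H^*\log_q(A)H)$ is concave on positive definite operators for $0\le q<1$. Joint concavity in the tuple is then formal: the embedding $(A_1,\ldots,A_k)\mapsto \mathrm{diag}(A_1,\ldots,A_k)$ is affine, and a convex combination of block diagonal positive definite operators is again block diagonal with block-wise convex combinations as its blocks. Restricting the concave function to this affine slice therefore produces joint concavity of $\varphi$ on $k$-tuples of positive definite matrices.

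I do not anticipate a genuine obstacle here. The only technicality worth checking is that $H^*\log_q(A)H$ lies in the domain of $\exp_q$ for $q\in[0,1)$, i.e.\ strictly below $-(q-1)^{-1}$. This is handled exactly as in the proof of Theorem~\ref{2.2}: for each $i$ we have $\log_q(A_i) < (1-q)^{-1}$, hence $\log_q A < (1-q)^{-1}$ as a block diagonal operator, and the isometry $H$ preserves the strict operator inequality, giving $H^*\log_q(A)H < (1-q)^{-1}\,H^*H = -(q-1)^{-1}$, as required for $\exp_q$ to be well defined.
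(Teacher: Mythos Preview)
Your argument is correct and is precisely the intended route: the paper states only that the corollary is ``an easy consequence of Corollary~\ref{2.3}'', and the block-diagonal reduction you describe is exactly the standard device that makes this consequence immediate. The domain check for $\exp_q$ is handled just as in Theorem~\ref{2.2}, so nothing is missing.
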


The second author \cite[Theorem 3.1]{Han15} proved that $\varphi$ is positively homogeneous of degree one. Since $\varphi$ is concave for $q\in[0,1)$  and by appealing to \cite[Lemma 2.1]{Han15}, we may reason as in \cite[Corollary 3.4]{Han15} to obtain:

\begin{corollary}\label{5.2} The function $\varphi$ defined in $(5.1)$ satisfies the inequality
\begin{eqnarray*}
\varphi(B_1,\ldots,B_k)\le \tr \exp_q \left(\sum_{i=1}^kH^*\log_q(A_i)H_i\right)^{2-q}\sum_{j=1}^kH_j^*\bigl(\fd{}\log_q(A_j)B_j\bigr)H_j
\end{eqnarray*}
for $0\le q<1.$
\end{corollary}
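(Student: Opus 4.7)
The plan is to read the inequality as a tangent-plane bound for a concave, positively homogeneous function, mirroring the argument given for the range $q\in[1,3]$ in \cite[Corollary 3.4]{Han15}. By Corollary 5.1 the function $\varphi$ is concave on positive definite $k$-tuples for $0\le q<1,$ and by \cite[Theorem 3.1]{Han15} it is positively homogeneous of degree one. Combined, these two properties yield the abstract upper bound $\varphi(B)\le \frechetdiff\varphi(A)\,B,$ so what remains is to identify this Fr\'echet differential with the right-hand side of Corollary 5.2.

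The abstract tangent inequality is packaged as \cite[Lemma 2.1]{Han15}, and I would simply quote it. Its derivation is short enough to recall: for a concave Fr\'echet differentiable $f$ one has $f(B)\le f(A)+\frechetdiff f(A)(B-A),$ while differentiating the identity $f(tA)=tf(A)$ at $t=1$ yields Euler's relation $\frechetdiff f(A)\,A=f(A)$ for degree-one positively homogeneous $f.$ Adding these gives $f(B)\le\frechetdiff f(A)\,B$ directly.

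For the chain-rule computation of $\frechetdiff\varphi(A)\,B$ I would factor $\varphi=\tr\,\exp_q\circ T$ with $T(A)=\sum_{i=1}^k H_i^*\log_q(A_i)H_i.$ Linearity of $T$ in each block coordinate gives $\frechetdiff T(A)\,B=\sum_{j=1}^k H_j^*\bigl(\frechetdiff\log_q(A_j)\,B_j\bigr)H_j,$ and the standard trace identity $\tr\,\frechetdiff f(X)\,Y=\tr f'(X)\,Y$ (a consequence of the divided-difference representation of $\frechetdiff f(X)$ together with cyclicity of the trace), applied with $f=\exp_q$ and the scalar derivative $\exp_q'(x)=\exp_q(x)^{2-q},$ produces
\[
\frechetdiff\varphi(A)\,B=\tr\exp_q\bigl(T(A)\bigr)^{2-q}\sum_{j=1}^k H_j^*\bigl(\frechetdiff\log_q(A_j)\,B_j\bigr)H_j,
\]
which matches the right-hand side of the corollary verbatim.

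The main obstacle, insofar as there is one, is a bookkeeping check of domain compatibility: for $q\in[0,1)$ the operator $T(A)$ must lie in the domain of $\exp_q,$ that is, $T(A)<(1-q)^{-1}.$ This is immediate from the scalar bound $\log_q(x)<(1-q)^{-1}$ for every $x>0$ together with the normalization $\sum_i H_i^*H_i=1;$ once this is granted, the Fr\'echet differentiability of $\log_q,$ $\exp_q,$ and of $\varphi$ at positive definite arguments is a standard consequence of the spectral calculus, and the proof closes without further surprises.
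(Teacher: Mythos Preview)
Your proposal is correct and follows essentially the same route as the paper: concavity from Corollary~\ref{5.1}, positive homogeneity from \cite[Theorem 3.1]{Han15}, the abstract tangent bound $\varphi(B)\le\fd{\varphi}(A)B$ from \cite[Lemma 2.1]{Han15}, and then the chain-rule identification of $\fd{\varphi}(A)B$ exactly as in \cite[Corollary 3.4]{Han15}. Your explicit domain check that $T(A)<(1-q)^{-1}$ is a welcome addition that the paper leaves implicit.
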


\begin{theorem}\label{5.3} Let $A$ and $B$ be negative definite matrices. The inequality
\begin{eqnarray*}
\tr \exp_q(A+B)\le \tr \exp_q(A)^{2-q}\left(A(q-1)+\exp_qB\right)
\end{eqnarray*}
then holds for $0\le q<1.$
\end{theorem}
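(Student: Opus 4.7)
The plan is to apply Corollary 5.2 with $k=2$ and a one-parameter family of arguments, and then let the parameter tend to zero. Fix $t \in (0,1)$ and set $H_1 = \sqrt{t}\,I$, $H_2 = \sqrt{1-t}\,I$, so that $H_1^*H_1 + H_2^*H_2 = I$. Since $A$ and $B$ are negative definite and $0\le q<1$, both $\exp_q(A/t)$ and $\exp_q(B/(1-t))$ are well-defined positive definite operators. I will take $A_1 = B_1 = \exp_q(A/t)$, $A_2 = I$, and $B_2 = \exp_q(B/(1-t))$.

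With these choices the left-hand side of Corollary 5.2 simplifies to
\[
\varphi(B_1,B_2) = \tr \exp_q\bigl(t\log_q B_1 + (1-t)\log_q B_2\bigr) = \tr \exp_q(A+B).
\]
For the right-hand side, $t\log_q A_1 + (1-t)\log_q A_2 = A$, so the outer factor is $\tr \exp_q(A)^{2-q}$. The Fr\'echet-derivative contributions reduce via two standard identities: Euler's relation $d\log_q(X)(X) = X^{q-1}$, which follows from the degree-$(q-1)$ homogeneity of $X\mapsto X^{q-1}$, together with $d\log_q(I)(Y)=Y$, which follows from $\log_q'(1)=1$ and the fact that $I$ commutes with $Y$. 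Using $(\exp_q(A/t))^{q-1} = I+(q-1)A/t$ and collecting the two Fr\'echet terms weighted by $t$ and $1-t$, the bracket on the right-hand side becomes $tI + (q-1)A + (1-t)\exp_q(B/(1-t))$. Hence, for every $t \in (0,1)$,
\[
\tr \exp_q(A+B) \le \tr \exp_q(A)^{2-q}\bigl[tI + (q-1)A + (1-t)\exp_q(B/(1-t))\bigr].
\]

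Letting $t \to 0^+$ and using continuity of $\exp_q$, one has $(1-t)\exp_q(B/(1-t)) \to \exp_q(B)$ and $tI \to 0$, while the outer factor is independent of $t$; the stated inequality is obtained in the limit. The main obstacle is verifying the Fr\'echet-derivative identities $d\log_q(X)(X)=X^{q-1}$ and $d\log_q(I)(Y)=Y$ in the matrix setting, but both are straightforward consequences of the definition $\log_q(X) = (X^{q-1}-I)/(q-1)$ (the first via Euler's homogeneity relation applied to $X\mapsto X^{q-1}$, the second via functional calculus at $X=I$, where everything commutes).
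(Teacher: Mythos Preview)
Your proof is correct and follows essentially the same route as the paper: specialize Corollary~5.2 to $k=2$ with $H_1=\sqrt{t}\,I$, $H_2=\sqrt{1-t}\,I$, choose $A_1=B_1=\exp_q(A/t)$, $A_2=I$, $B_2=\exp_q(B/(1-t))$, evaluate the two Fr\'echet derivatives via $d\log_q(X)(X)=X^{q-1}$ and $d\log_q(I)(Y)=Y$, and let $t\to 0^+$. The paper carries out exactly this computation (with $t$ written as $\varepsilon$), first recording the intermediate inequality $\varphi(B_1,B_2)\le \tr\exp_q(H_1^*\log_q(B_1)H_1)^{2-q}\bigl(H_1^*B_1^{q-1}H_1+H_2^*B_2H_2\bigr)$ before substituting the scalar $H_i$'s.
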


\begin{proof} In Corollary 5.2 we set $k=2,$  $A_1=B_1$ and $A_2=1.$ We then obtain the inequality
\begin{equation}\label{main inequality for k=2}
\varphi(B_1,B_2)\le \tr \exp_q(H_1^*\log_q(B_1)H_1)^{2-q}(H_1^*B_1^{q-1}H_1+H_2^*B_2H_2)
\end{equation}
for  $0\le q<1.$
Furthermore, we set $ H_1=\varepsilon^{1/2} $ for $ 0< \varepsilon <1. $ To fixed negative definite matrices $L_1$ and $L_2$ we may choose
$B_1$ and $B_2$ such that
\[
\begin{array}{rl}
L_1&=H_1^*\log_q(B_1)H_1=\varepsilon \log_q(B_1),\\[1.5ex]
L_2&=H_2^*\log_q(B_1)H_2=(1-\varepsilon) \log_q(B_2).
\end{array}
\]
By inserting these operators in inequality (\ref{main inequality for k=2}) we obtain
\begin{eqnarray*}
&&\tr \exp_q(L_1+L_2)
\\[0.5ex]
&\le&
\tr \exp_q(L_1)^{2-q}\bigl(\varepsilon \exp_q(\varepsilon^{-1}L_1)^{q-1}+(1-\varepsilon)\exp_q((1-\varepsilon)^{-1}L_2)\bigr)
\\[0.5ex]
&=&
\tr \exp_q(L_1)^{2-q}\bigl(L_1(q-1)+\varepsilon+(1-\varepsilon)\exp_q((1-\varepsilon)^{-1}L_2)\bigr).
\end{eqnarray*}
Since $
\lim_{\varepsilon\rightarrow 0}(1-\varepsilon)\exp_q((1-\varepsilon)^{-1}L_2)=\exp_q(L_2),
$ we obtain
\begin{eqnarray*}
\tr \exp_q(L_1+L_2)
\le
\tr \exp_q(L_1)^{2-q}\bigl(L_1(q-1)+\exp_q(L_2)\bigr).
\end{eqnarray*}
Finally, by replacing $L_1$ and $L_2$ with $A$ and $B,$ the assertion follows.
\end{proof}

For  $q=1$ we recover the Golden-Thompson inequality
\begin{eqnarray*}
\tr \exp(A+B)\le \tr \exp(A)\exp(B),
\end{eqnarray*}
firstly only for negative definite operators. However, by adding suitable constants to $ A $ and $ B, $ we obtain the trace inequality for arbitrary self-adjoint operators.
\\[1ex]
{\bf Acknowledgements.}
The first author acknowledges support from the Natural Science Foundation of the Jiangsu Higher Education Institutions of China, Grant No: 18KJB110033.
The second author acknowledges support from the Japanese government Grant-in-Aid for scientific research 17K05267.

{\small

}

\end{document}